\begin{document}

\input glyphtounicode.tex
\pdfgentounicode=1

\title{A Nondeterministic Model for Abstract Geometrical Computation}

\author{Rakhshan Harifi \and Sama Goliaei}
\institute{ University of Tehran, Tehran, Iran \\ \email{\{rakhshan.harifi,sgoliaei\}@ut.ac.ir} }
%

\maketitle

\begin{abstract}
A signal machine is an abstract geometrical model for compu-
tation, proposed as an extension to the one-dimensional cellular automata,
in which discrete time and space of cellular automata is replaced with
continuous time and space in signal machine. A signal machine is defined
as a set of meta-signals and a set of rules. A signal machine starts from an
initial configuration which is a set of moving signals. Signals are moving
in space freely until a collision. Rules of signal machine specify what
happens after a collision, or in other words, specify out-coming signals
for each set of colliding signals. Originally signal machine is defined
by its rule as a deterministic machine. In this paper, we introduce the
concept of non-deterministic signal machine, which may contain more
than one defined rule for each set of colliding signals. We show that for
a specific class of nondeterministic signal machines, called $k$-restricted
nondeterministic signal machine, there is a deterministic signal machine
computing the same result as the nondeterministic one, on any given
initial configuration. $k$-restricted nondeterministic signal machine is a
nondeterministic signal machine which accepts an input iff produces a
special accepting signal, which have at most two nondeterministic rule
for each collision, and at most $k$ collisions before any acceptance.
\end{abstract}

\keywords{Abstract Geometrical Computation, Signal Machine, Nondeterministic Signal Machine, Collision Based Computing}

\section{Introduction}
\label{sec:intro}
Consider some colored particles moving on a line with constant speeds. Some
particles may have zero speed and do not move. When two or more particles
collide, particles are replaced with some new colored particles, according to
predefined rules of the signal machine. Suppose that you are given some colored
particles with their initial position on a line and collision rules, and you are
asked to drawing the space-time diagram of the movement of these particles,
until no more collision may happen. Signal machines are such dynamical
systems, were signals are traces of particles in space-time diagram. The signal
machine model is an $\textit{Abstract Geometrical Computation}$ (AGC) model dealing
with Euclidean geometry which introduced in 2003 by Jerome-Durand Lose
for the first time \cite{durand2003calculer}. Signal machines are originally introduced as a continues extension of one-dimensional cellular automata in time and space \cite{durand2008signal}. Signals in
a signal machine are moving independently, thus, signal machines are inherently
parallel and they can viewed as a massively parallel computational model \cite{duchier2010massively}.

Signal machines are able to simulate any Turing-Computation and they are
Turing-Universal \cite{durand2005abstract}. With continuous time, they can be used to decide (in finite
time) recursively enumerable problems using the black-hole principle \cite{durand2005abstract,durand2006abstract}.
They are also capable of analog computation by using the continuity of space and
time to simulate analog models such as BSS \cite{durand2007abstract,durand2008abstract}. To achieve massive parallelism,
a fractal tree construction technique is provided on signal machine, and it is used
to solve the satisfiability of quantified Boolean formula (Q-SAT) problem, the
classic PSPACE-complete problem, in bounded space and time \cite{duchier2012computing}.

All studies on signal machines are on deterministic version of signal machines.
However, on every computational model, studying the non-deterministic versions
of computation and theirs computability power is an important concept. One
important question which arises is that if nondeterminism brings more power
to the considered computational model. To this end, the problem of comparing
computational power of nondeterministic and deterministic signal machines is
not investigated yet.

In this paper, first we define nondeterministic signal machine as a signal
machine which may have more than one rule applicable for each collision from
which one is nondeterministically selected to be applied. Then, we show that the
nondeterminism does not improve computability power of signal machines, for a
specific class of signal machines. For this purpose we use a constructive proof. In
other words, we propose an algorithm which converts each nondeterministic signal
machine to an equivalent deterministic signal machine. In our proposed algorithm
we utilize the parallel nature of signal machines and combine techniques and
structures which are applicable due to the geometrical nature of signal machines.
The main idea of our algorithm is to produce all possible paths of computation
in parallel. In this procedure, we try to obtain copies of the computations paths,
such that each possible path is assigned to one copy of the computation according
to a unique binary number.

In this paper, in Section \ref{sec:intro} we introduce signal machines and define it formally. In Section \ref {sec:struc} we describe many useful structures and abilities of signal machines which are useful for the rest of the paper. Afterward, in Section \ref{sec:NSM} first we introduce nondetrministic signal machines and its definition formally. 
Then, we propose a theorem on a class of nondeterministic signal machines and consequently, we propose an algorithm to prove it. 
Finally, in Section \ref{sec:conclusion} we propose some concluding remarks and possible future works.

\section{Background on Signal Machine}
\label{sec:intro}
The \textit{Signal Machine (SM)} model is an abstract geometrical computation model
which act based on many colored moving particles on an Euclidean linear space
and their collisions. Each signal machine has a set of collision rules. Collision rules
determine what happens after collision of a set of particles. If no rule is defined for
collision of a set of particles, nothing is replaced and the set of particles continue
their path. We can represent execution of a signal machine as a space-time
diagram of its particles. The trajectory of each particle in space-time diagram is
called a signal. Signal machines can be formally defined as $(M, S, R, c_0 )$, where:

\begin{itemize}
 \item $M$ is a finite, non-empty set of meta-signals. Each meta signal is a type
representing the set of signals with the same type. Thus, each signal is an
instance of a meta-signal $m \in M$.
\item $S:\ M \rightarrow {\rm I\!R}$ is a function which assigns a real speed to each meta-signal.
\item  $R : \{ \sigma_{1}, \sigma_{2},..., \sigma_{n} \} \rightarrow  \{\sigma_{1}',
\sigma_{2}',...,\sigma_{m}' \} $ is the collision function representing
signal replacement rules on collisions, where all $\sigma_{i}$ are meta-signals of distinct
speeds as well as $\sigma_{j}'$.
\item $c_0$ is the initial configuration.
\end{itemize}

An \textit{initial configuration} $c_{0}$ is a finite set $c_{0} = \{ ( \sigma_{i}, x_{i})| \sigma_{i} \in M\   and \  x_i \in {\rm I\!R} \}$
that determines the primary position of signal in space axis. A signal machine is executed from an initial configuration and 
represented geometrically as a space-time diagram. In the space-time diagram, time is increasing upwards. Figure \ref{fig:simple-signal}
illustrates a simple space-time diagram.

\begin{figure}[!htbp]
\begin{center}
\includegraphics[scale=1]{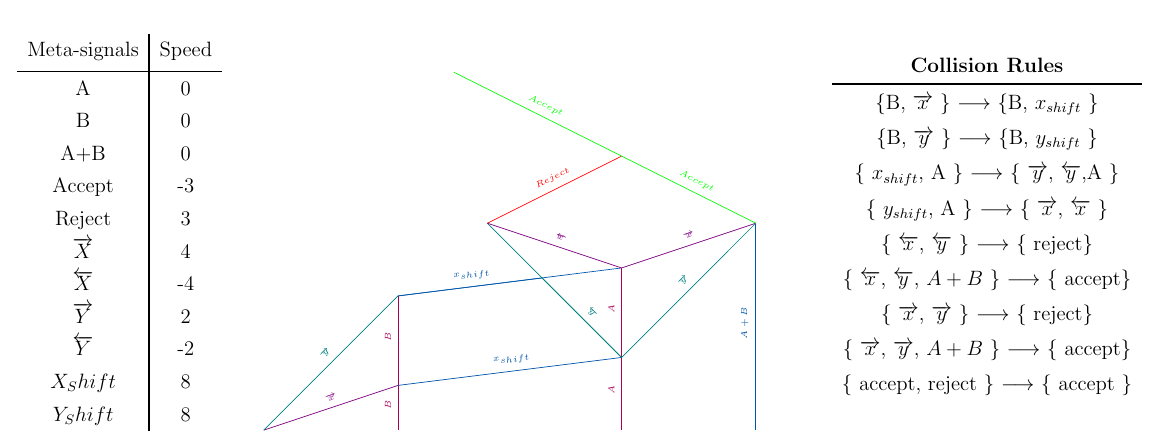}
\caption{Example of a simple signal machine. Meta-signals are given on the left and
collision rules are listed on the right side.}
\label{fig:simple-signal}
\end{center}
\end{figure}

The \textit{input} of a signal machine is defined by its initial configuration ($c_0$), and
the output is represented by signals that comes out when no more collision occurs.
Input of a signal machine is placed at the bottom of the time-space diagram at
time zero, and the output consists of signals moving freely after all the collisions
at the top of the time-space diagram.

We define a directed acyclic graph for a signal machine based on its collisions
and dependencies between collisions. This directed acyclic graph consists of
collisions as vertices and signals as edges, which are oriented according to their
moving direction on space-time diagram. Based on directed acyclic graph of
collisions, two major complexity measures are defined, 1) $\textit{time}$, which is defined
as the maximal length of a chain or collision depth, i.e. the length of the longest
path, and 2) $\textit{space}$, which is defined as the maximal number of signals in a
time \cite{duchier2010fractal}.

\section{Structures}
\label{sec:struc}
In this section, we present many technique that could be applied on signal
machine structure due to the geometric nature of this model. These techniques
help us propose our algorithm.

\begin{proposition}(Middle) \cite{duchier2012computing}
Based on a simple geometric architecture, we can
obtain the middle point of two stationary signals, i.e. signals with zero speeds.
Figure \ref{fig:middle} illustrates this technique.
\end{proposition}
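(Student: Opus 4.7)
The plan is to exploit the mirror symmetry of two signals with equal and opposite speeds. Let the two stationary signals sit at positions $x_A < x_B$. I would introduce an auxiliary pair of meta-signals of speeds $+v$ and $-v$ (for any fixed $v > 0$), and arrange the configuration so that at some common time $t_0$ the $+v$ signal is located at $x_A$ and the $-v$ signal is located at $x_B$. The simplest way to secure this synchrony is to place both auxiliary signals alongside the stationary ones in the initial configuration, i.e.\ at $t_0 = 0$; a more general variant is to have them emitted from the two stationary signals upon a triggering event that reaches $x_A$ and $x_B$ simultaneously.

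The geometry is then immediate. At time $t_0 + \tau$ the two moving signals occupy positions $x_A + v\tau$ and $x_B - v\tau$; setting these equal yields $\tau = (x_B - x_A)/(2v)$ and the meeting abscissa $(x_A + x_B)/2$, exactly the midpoint. I would then equip the machine with a collision rule declaring that, whenever the $+v$ and $-v$ auxiliary meta-signals meet, they produce a new signal of speed $0$. This output signal appears at the horizontal midpoint of $x_A$ and $x_B$ and remains stationary there, which is the desired marker. Bounded speeds and a single collision rule keep the architecture \emph{simple} in the sense stipulated by the proposition.

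The collision geometry itself is forced by symmetry, so the one nontrivial design choice is how to launch the two auxiliary signals synchronously from $x_A$ and $x_B$. If they are part of the initial configuration this is automatic; otherwise the trigger must be dispatched in a symmetric way from the two stationary signals, which is easy to arrange but has to be justified. I expect this to be the only point requiring any care in the cited construction, and the referenced figure presumably displays precisely such a symmetric emission pattern; once the two auxiliary signals are launched symmetrically in time and speed, the meeting point necessarily lies on the vertical bisector and the midpoint is obtained with no further calculation.
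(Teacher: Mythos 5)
Your symmetric two-signal argument is geometrically sound in the special case you lean on — when the $+v$ and $-v$ auxiliary signals are simply placed at $x_A$ and $x_B$ in the initial configuration, they do meet at $(x_A+x_B)/2$ and a single collision rule can drop a stationary marker there. But the step you wave through as ``easy to arrange but has to be justified'' is in fact the entire difficulty, and it is exactly the problem the standard construction is designed to avoid. In the setting where this proposition is used (e.g.\ recursive halving to build the fractal cloud), the two stationary signals are spatially separated and the midpoint computation must be initiated by a \emph{single} event; a stationary signal has no clock, so getting $x_A$ and $x_B$ to emit at the same instant requires a trigger that reaches both simultaneously — which is essentially the midpoint problem you are trying to solve, i.e.\ the argument becomes circular. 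So as a general-purpose primitive your construction has a real gap.

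The construction the paper's figure depicts (taken from the cited reference) launches \emph{both} auxiliary signals from one endpoint and uses unequal speeds plus a bounce instead of symmetry. Concretely, from a single event at $x_A$ emit two signals of speeds $v$ and $3v$. The fast one reaches the stationary signal at $x_B$ after time $d/(3v)$ (with $d = x_B - x_A$) and a collision rule reflects it at speed $-3v$, so its position is $x_A + 2d - 3vt$; the slow one is at $x_A + vt$. They meet when $vt = 2d - 3vt$, i.e.\ at $t = d/(2v)$ and abscissa $x_A + d/2$, where a rule emits the stationary middle marker. This needs no synchronization between the two ends — everything is causally downstream of one collision at $x_A$ — which is what makes the architecture ``simple'' and, more importantly, iterable inside the fractal-cloud construction. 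If you want to keep your symmetric variant, you must either restrict it to midpoints computed at time $0$ from the initial configuration, or supply an explicit mechanism for simultaneous emission; the latter is not available without already knowing the midpoint.
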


\begin{figure}[!htbp]
\begin{center}
\includegraphics[scale=1]{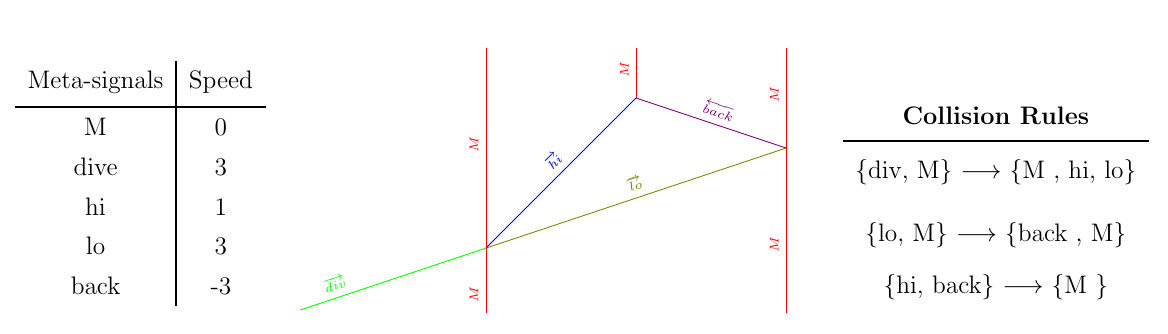}
\caption{Geometrical technique for obtaining the middle of two stationary signals (taken from \cite{duchier2012computing}).}
\label{fig:middle}
\end{center}
\end{figure}

\begin{proposition} (Freezing/Unfreezing) \cite{durand2009abstract}.
A signal machine is able to freeze
computations by replacing all the signals with a set of parallel signals during a
time and then unfreeze and continue the computation, after a while.
\end{proposition}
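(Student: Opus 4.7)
The plan is to augment the given signal machine with a small additional \emph{freezing apparatus} consisting of extra meta-signals and collision rules. First, for every meta-signal $m \in M$ I would introduce a frozen counterpart $\bar{m}$ with speed $S(\bar{m}) = 0$. Then I would add two auxiliary signals, a freeze signal $F$ and an unfreeze signal $U$, both with the same speed $v_F$ chosen strictly larger in absolute value than every speed in $S(M)$. The freeze signal is launched first, followed after a fixed time-gap $\Delta t$ by the parallel signal $U$. The rules specify that when $F$ meets any signal of type $m$ it is replaced by $\bar{m}$ (and $F$ continues unchanged), and symmetrically when $U$ meets $\bar{m}$ it is replaced back by $m$ moving at its original speed $S(m)$ (with $U$ continuing).

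The main thing to verify is that the computation resumes after unfreezing exactly as it would have without freezing, up to a uniform shift in space-time. The key geometric fact is that, because $F$ and $U$ travel along parallel worldlines with the same speed, every signal they process lies frozen for exactly the same duration $\Delta t$, independent of the signal's own speed or initial position: if $F$ hits a signal at space-time point $(x_1, t_1)$, then $U$ hits the resulting stationary $\bar{m}$ at $(x_1, t_1+\Delta t)$. Consequently the configuration emitted by $U$ is the configuration that $F$ captured, rigidly translated by $\Delta t$ in time; the causal structure (directed acyclic graph of collisions) downstream of $U$ is therefore isomorphic to the one that would have formed without the freezing.

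Two routine checks complete the argument. First, no collisions inside the frozen strip are missed: every signal between $F$ and $U$ has speed zero, so relative positions cannot change and no frozen pair can interact. Second, no original signal slips past $F$ or $U$ without being converted: because $|v_F|$ exceeds every other speed, each signal is necessarily overtaken head-on. Throughout, one must specify the freeze/unfreeze rule for each $m \in M$ individually, but each such rule is a two-signal collision of distinct-speed inputs, which the definition of $R$ allows.

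The main obstacle, as I see it, is the degenerate case in which a collision among original signals is scheduled to occur at the very instant $F$ or $U$ arrives, producing an undefined multi-way collision. I would handle this by a small generic-position perturbation of $v_F$ (or of the launch time), which preserves the freeze/unfreeze geometry while avoiding coincidences; alternatively, one could enlarge $R$ with explicit rules for the finitely many types of triple collisions that can occur, but the perturbation argument is cleaner and is the standard device used for similar freezing constructions in the cited literature.
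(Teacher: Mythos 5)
Your construction is essentially the one the paper describes: a frozen counterpart $\bar{m}$ for each meta-signal, all frozen signals sharing one common speed so they form a non-interacting parallel beam, and freeze/unfreeze signals $F$ and $U$ on parallel worldlines faster than every other signal, so that each signal is frozen for the same duration and the downstream collision graph is rigidly translated. (The paper allows the common frozen speed to be nonzero so the beam can also be shifted in space; your choice of speed $0$ is a harmless special case for the proposition as stated.) Your verification that nothing slips past $F$ or $U$ and that frozen signals cannot interact matches the intended argument.

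The one genuine problem is your preferred handling of the degenerate case where $F$ (or $U$) arrives exactly at a collision point of the original signals. You propose to perturb $v_F$ or the launch time, but a signal machine's meta-signals, speeds, and rules are fixed once and for all, while the initial configuration ranges over arbitrary real positions; for any fixed choice of $v_F$ and launch position there exist inputs for which $F$ lands exactly on a collision, so no single perturbed machine works uniformly. This is not the standard device in the cited literature --- the standard (and the paper's) solution is precisely the ``alternative'' you dismiss: since $M$ is finite, introduce, for each set of meta-signals that can meet, a frozen-collision meta-signal produced when $F$ enters that collision, together with an unfreeze rule that emits the \emph{outputs} of that collision (rather than the original inputs). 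You should promote that alternative to the main argument and also state the corresponding unfreeze rules explicitly, which your write-up currently only gestures at.
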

The $\textit{freezing}$ technique is used to preserve configuration and then restore it
later. During the freezing process, one freezing signal is sent from one end of the
space-time diagram and it crosses all the signals. In order to perform freezing
technique, for each meta-signal we define a new meta-signal representing frozen
version of it. A collision between the freezing signal and a signal replaces the
signal with the frozen version of it. Also, since the freezing signal may enter to
a collision of some non-freezing signals, for each set of meta-signals we define a
frozen version of that set, which is produced after a collision of freezing signal
with a set of non-frozen signals. All the frozen signals have same velocities, so they
produce parallel lines in space-time diagram and do not collide with each other. In addition, the distance between signals remain unchanged, so the configuration
freezes and shifts on space~\cite{adamatzky2012collision}.

To $\textit{unfreeze}$ a configuration, an unfreezing signal is sent from one end, crosses
the configuration, and replaces each frozen signal by the original one (or the
result of the collision, for frozen collisions). Freezing and unfreezing signals have
same speeds, thus, the configuration is restored exactly as it was before, but with
a shift in space.

Freezing is a useful technique and brings us a good ability in computations
by signal machines. For example, when a configuration is frozen, we can consider
parallel frozen signals as a set of signals, so it is possible to change their directions
and even propagate the beam to everywhere and then unfreeze and retrieve the
computations by an unfreezing signal with the same velocity as the freezing
signal. Freezing and unfreezing procedure is depicted in Figure \ref{fig:freezing}.

\begin{figure}[!htbp]
\begin{center}
\includegraphics[scale=0.8]{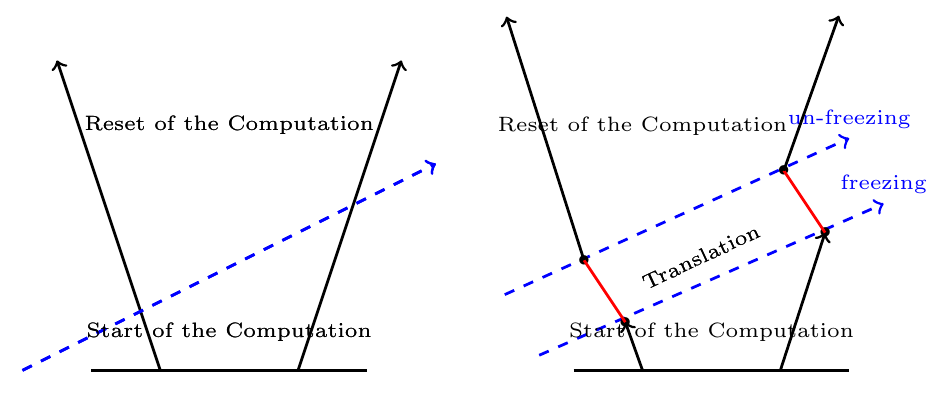}
\caption{Freezing technique (taken from \cite{durand2009abstract}).}
\label{fig:freezing}
\end{center}
\end{figure}

\begin{proposition} (Scaling). A signal machine can scale distances between parallel
signals, and thus each computation, by a scaling factor.
\end{proposition}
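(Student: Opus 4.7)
The plan is to introduce a small set of fresh meta-signals together with two collision rules that implement a temporary change of slope: the whole parallel family is rerouted onto a different direction for a controlled span and then turned back, so that the resulting family is again parallel but with a uniformly rescaled spacing. I will work throughout in the space-time diagram, where ``parallel signals'' correspond to parallel straight lines.

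Concretely, let the given parallel signals have common speed $v$ and let $\lambda>0$ be the desired scaling factor. First I would add two auxiliary meta-signals $\alpha$ and $\beta$ of distinct speeds $u_1$ and $u_2$, both different from $v$, together with one intermediate meta-signal $\gamma$ of a third speed $w$. Then I would add two collision rules: a crossing of $\alpha$ with any parallel signal replaces that signal by $\gamma$ while $\alpha$ itself continues; a crossing of $\beta$ with $\gamma$ replaces $\gamma$ by the original parallel meta-signal while $\beta$ continues. Launching $\alpha$ so that it meets every parallel signal in turn converts the entire parallel family into a fanning family of $\gamma$-signals issued from the trace of $\alpha$; launching $\beta$ afterwards so that it meets every $\gamma$-signal converts them all back to the original meta-signal, and since they then share speed $v$ the new family is once again parallel.

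A direct coordinate computation shows that the ratio of any new gap to the corresponding old gap is the single rational expression
\begin{equation*}
\lambda \;=\; \frac{(u_2-v)(u_1-w)}{(u_2-w)(u_1-v)},
\end{equation*}
which depends only on the four speeds and not on which pair of consecutive signals is being compared; hence the whole configuration is scaled uniformly. By fixing $u_1,u_2,v$ and varying $w$ continuously (away from the poles of the expression) one sweeps $\lambda$ continuously over all of $(0,\infty)$, so every positive scaling factor is attainable. The construction also translates the configuration; if a translation-free result is required, one can compose two symmetric applications of the procedure, or use the Middle proposition to recentre.

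The geometric and algebraic steps are routine; the main obstacle is handling the ordering constraints. I must verify that $\alpha$ meets each parallel signal before $\beta$ is released, that $\beta$ then meets every $\gamma$-signal before any of them exits the region of interest, and that no two $\gamma$-signals collide with each other in between. These are naturally arranged because the $\gamma$-signals all emanate from a single line (the trace of $\alpha$) and so are ordered along it, but they do constrain the admissible choices of $w$ and of the launching offsets of $\alpha$ and $\beta$. The one substantive point is thus confirming that the range of parameters consistent with these ordering requirements still covers every positive $\lambda$, which follows from the continuity and surjectivity of the displayed rational function onto $(0,\infty)$.
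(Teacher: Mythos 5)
Your construction is, at its core, the same Thales-based device the paper invokes for scaling a family of parallel signals: two transversal signals $\alpha$ and $\beta$ cutting the family, with an intermediate slope $w$ in between, and the intercept theorem giving a uniform gap ratio. Your explicit formula $\lambda = \frac{(u_2-v)(u_1-w)}{(u_2-w)(u_1-v)}$ is correct (the map from initial offsets to final offsets is affine with this derivative), and your worry about $\gamma$-signals colliding with one another is moot since they all share speed $w$ and are therefore parallel. Two points are missing relative to the full claim. First, the proposition asserts scaling of \emph{each computation}, not only of an already-parallel family; the paper gets this by first freezing the computation into a parallel beam, scaling that beam, and then unfreezing with a signal parallel to the freezing one, and your argument never performs this reduction. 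Second, once you do work with a frozen beam, the signals in it are instances of many distinct (frozen) meta-signals, so a single intermediate meta-signal $\gamma$ cannot work: the rule $\{\alpha,\sigma\}\to\{\alpha,\gamma\}$ would erase the identity of $\sigma$ and make the subsequent unfreezing impossible. You need one intermediate meta-signal $\gamma_\sigma$ per meta-signal $\sigma$, all with the same speed $w$, so that $\beta$ can restore each signal to its original type. With the freeze/scale/unfreeze composition added and the intermediate signals indexed by meta-signal, your argument matches the paper's.
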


The scaling means to scale down the distances between signals without any
change in their velocities and rules. Thus, by scaling technique, the configuration
is scaled without any change in the computation of signal machine. Scaling of
parallel signals is easy by a Thales based construction. Thus, for scaling any
computation (not necessarily parallel signals), we can freeze the computations to a
set of parallel signals and then scale the parallel signals by a \textit{scaling} signal. Finally,
by a \textit{unfreezing} signal which is parallel to the \textit{freezing} signal, the computations
with smaller scale is unfrozen and continue. In Figure \ref{fig:scaling}, the scaling procedure
and an example of applying scaling procedure on a simple signal machine (of
Figure \ref{fig:simple-signal}) is presented.

\begin{figure}[!htbp]
\centering
\subfigure[Scaling parallel signals \cite{durand2009abstract}]{\includegraphics[scale=0.5]{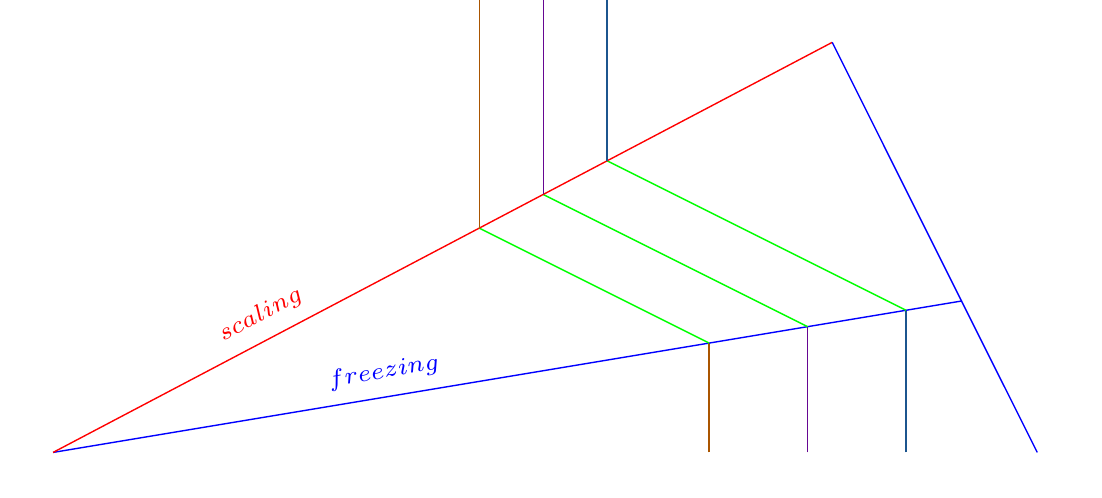}}
\subfigure[Scaling any computation \cite{durand2009abstract}]{\includegraphics[scale=0.5]{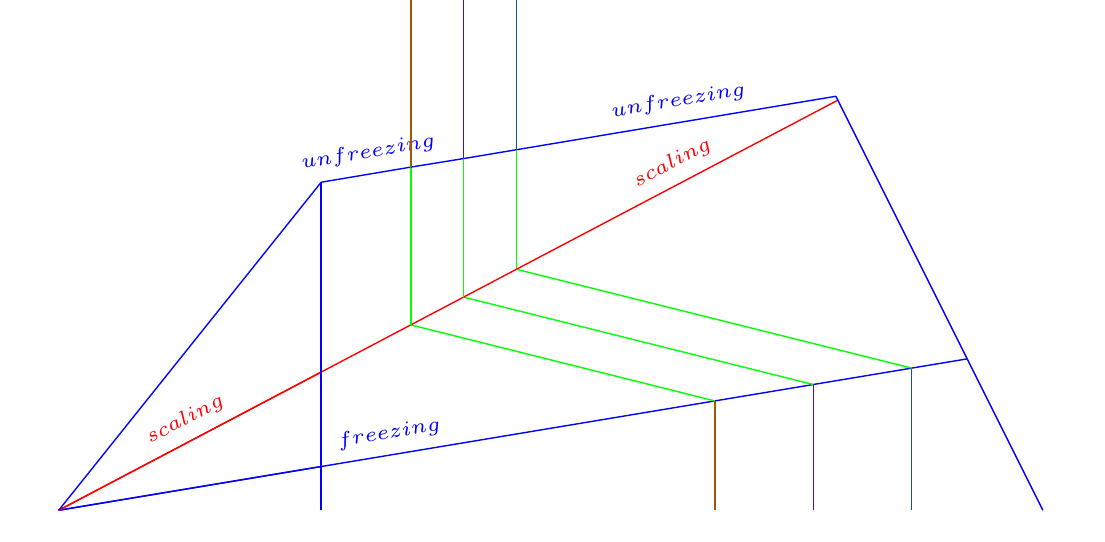}}
\subfigure[Example of scaling a simple signal machine]{\includegraphics[scale=1.0]{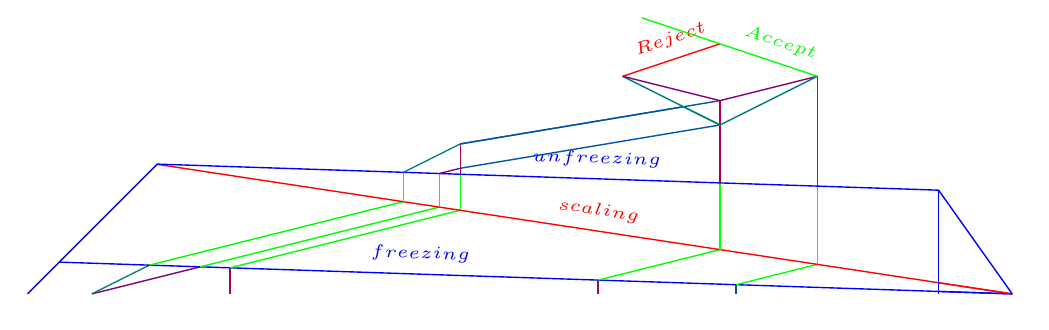}}
\label{fig:scaling}
\caption{Scaling.}
\end{figure}
 \begin{proposition}(Fractal Cloud)
Applying the procedure for obtaining the middle
of two signals repeatedly, a signal machine is able to halve space recursively and
have a fractal-like structure which is called fractal cloud (see Figure \ref{fig:fractal}).
\end{proposition}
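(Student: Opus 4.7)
The plan is to proceed by induction, using the Middle construction of Proposition~1 as the atomic building block. The base case is a pair of stationary signals at positions $x_L$ and $x_R$; a single application of the Middle procedure emits the required auxiliary moving signals, which meet at the midpoint and leave behind a new stationary (zero-speed) signal at $(x_L+x_R)/2$. The crucial observation is that the output of Middle has the same type as its input, namely a stationary signal, so the construction can be invoked again on each of the two sub-pairs $(x_L,(x_L+x_R)/2)$ and $((x_L+x_R)/2,x_R)$.

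First I would fix the meta-signals and rules for one instance of Middle, then duplicate them (with fresh meta-signal names per recursion level if necessary) so that upon the birth of a new midpoint signal, two copies of the construction are automatically triggered, one on each side. This yields a binary-tree pattern of midpoints whose horizontal extent is halved at every level — exactly the fractal cloud of Figure~\ref{fig:fractal}. Because each recursive subproblem is geometrically smaller than its parent, the whole picture inherits the self-similarity of a Cantor-like structure.

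The main obstacle will be showing that the auxiliary signals spawned by different recursive instances do not spuriously collide, i.e.\ that the space-time region used by the Middle construction on a sub-interval stays strictly inside the triangular cone above that sub-interval. This reduces to a geometric check on the slopes of the moving signals in Middle: if the auxiliary signals are emitted steeply enough, their trajectories remain within the interval determined by the two endpoints that triggered them, so the left and right recursions evolve in disjoint cones and cannot interact across the midpoint. If this geometric separation fails for the original Middle rules, I would interpose a scaling step (Proposition~3) before each recursive call, freezing the current configuration, scaling down the relevant sub-interval, and unfreezing, so that each subtree lives inside a strictly smaller cone than its parent and collision-freeness between levels is automatic.

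Finally, I would note that the construction is formally infinite in the number of collisions and signals produced, but this is harmless: signal machines are defined over continuous space and unbounded sets of meta-signals with continuous time, and at any fixed finite time only finitely many collisions have occurred. Hence the recursion defines a well-formed space-time diagram whose closure is the desired fractal cloud, completing the proof.
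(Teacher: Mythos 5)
Your overall strategy---iterating the Middle construction of Proposition~1 and checking that each recursive instance stays inside the cone above its own sub-interval---is exactly the intended construction (the paper itself offers no proof beyond citing the division process of Figure~\ref{fig:fractal}), and your identification of cone-disjointness as the key geometric obligation is correct. However, there is one genuine gap: your fallback of introducing ``fresh meta-signal names per recursion level'' is not available, because the definition in Section~\ref{sec:intro} requires $M$ to be a \emph{finite} set of meta-signals, while the fractal cloud has unboundedly many levels. The construction must instead reuse a single fixed set of meta-signals and rules at every level, and the proof has to argue that this works by \emph{scale invariance}: since speeds are unchanged and only distances are halved, the space-time diagram above each half-interval is a homothetic copy of the diagram above its parent, so the same finite rule set reproduces the pattern at every depth. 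This observation is the actual content of the proposition, and without it your induction does not establish that a legal signal machine realizes the structure. Your proposed remedy of interposing freeze/scale steps between levels has the same defect (each freeze/unfreeze round would otherwise need its own signal names) unless you make the scale-invariance argument explicit anyway.

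A second, more minor point: your closing claim that ``at any fixed finite time only finitely many collisions have occurred'' fails at the accumulation point of the full fractal. If the first division takes time $T$, level $n$ completes around time $T(2-2^{1-n})$, so infinitely many collisions accumulate before time $2T$; this is a singularity that the AGC literature treats separately and that you should either exclude (by stopping at a finite depth $k$, which is all the paper needs for the $2^k$-leaf cloud of Section~\ref{sec:NSM}) or address explicitly rather than dismiss as harmless.
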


This fractal cloud architecture is used for massively parallel computations. For
example, it is possible to propagate a computation's configuration as a beam
using the fractal cloud and do computations in each branch of this fractal cloud
in parallel.

\begin{figure}[!htbp]
\begin{center}
\includegraphics[scale=0.8]{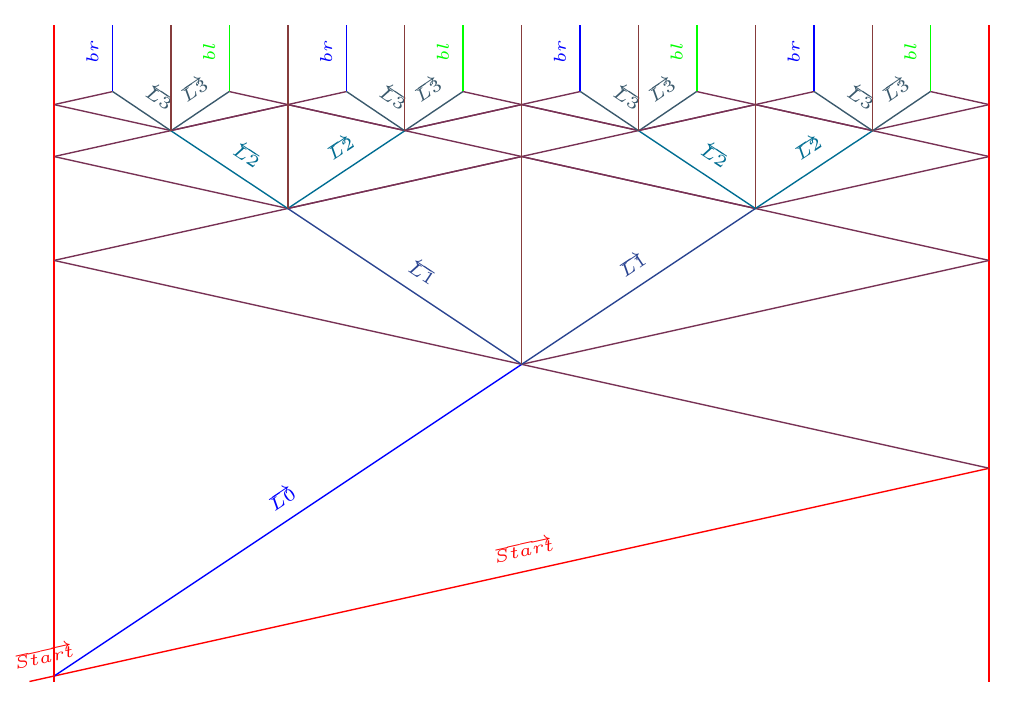}
\caption{Division process for constructing fractal cloud (taken from \cite{duchier2010fractal}).}
\label{fig:fractal}
\end{center}
\end{figure}

\section{Nondeterministic Signal Machine}
\label{sec:NSM}
In this section, we define nondeterministic signal machine, and show that deter-
ministic signal machine is capable of simulating a class of nondeterministic signal
machines. 

\subsection{Definition}
All the above mentioned signal machines are \textit{deterministic signal machines (DSM)}.
In this paper, we introduce \textit{nondeterministic signal machine (NSM)} as a signal
machine which may have more than one applicable rule for a collision. Figure \ref{fig:NSM}
presents a simple example of nondeterministic signal machine.
\\A \textit{nondeterministic signal machine}  is defined by a tuple $(M,S,R,c_0)$ where:
\begin{itemize}
\item $M$ is a finite set of \textit{meta-signals}.
\item $S : M \rightarrow {\rm I\!R}$ is a function which assigns a real speed to each meta-signal.
\item $R : \{ \sigma_{1}, \sigma_{2},..., \sigma_{n} \} \rightarrow  \{\sigma_{1}', \sigma_{2}',...,\sigma_{p}' \} | \{ \sigma_{1}'', \sigma_{2}'',..., \sigma_{q}'' \} |...$  is a set of collision rules which is a mapping from an arbitrary subset of $M$ with cardinality of at least two to any number of subsets of $M$, where all  $\sigma_{i}$ are meta-signals of distinct speed as well as  $\sigma_{j}'$ and $\sigma_{k}''$.
 \item $c_0$ is the initial configuration.
\end{itemize}
 We say that a nondeterministic signal machine accepts an input, if there
exists a set of collision rules when applying on collisions, an accepting output is
produced by the signal machine.
 \begin{figure}[!htbp]
\begin{center}
\subfigure[meta-signals]{\includegraphics[scale=0.5]{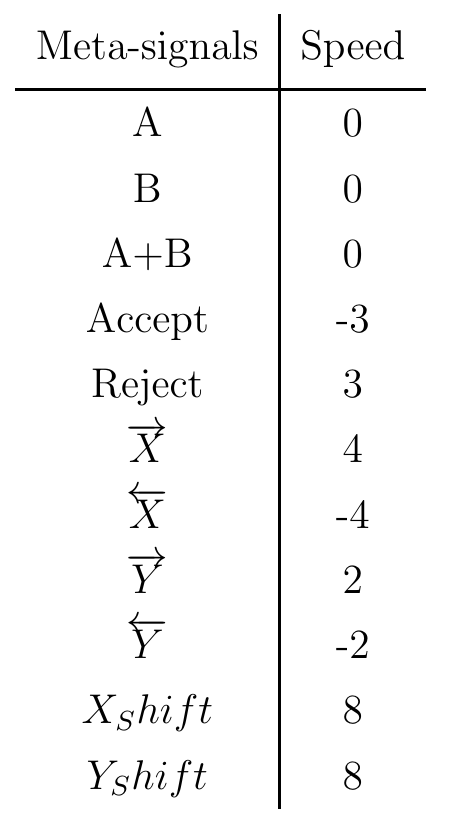}}
\subfigure[rules]{\includegraphics[scale=0.5]{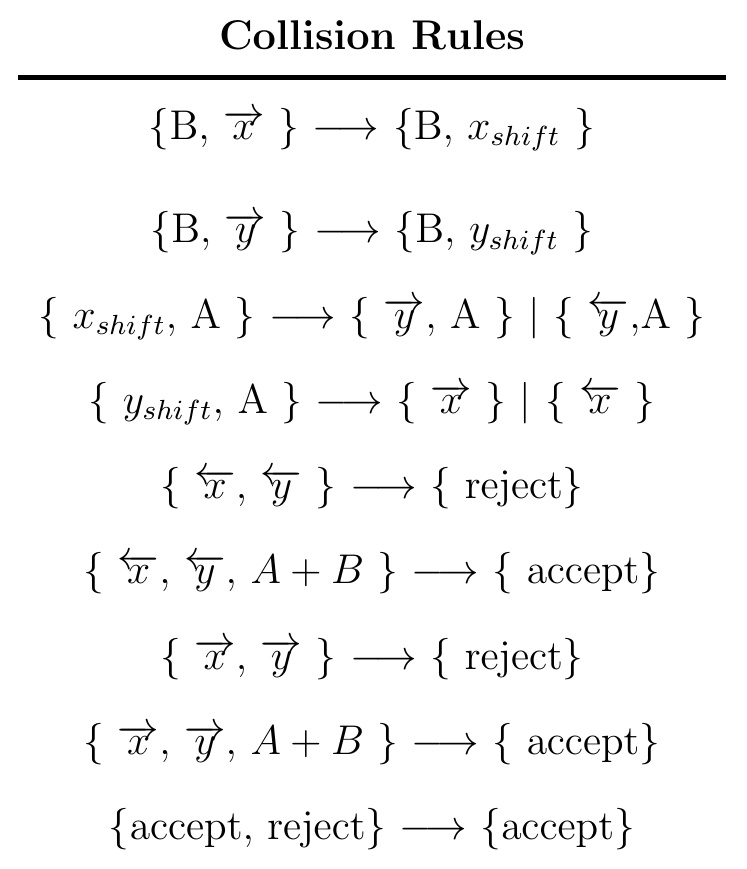}}
\subfigure[two examples of possible paths for this NSM ]{\includegraphics[scale=0.5]{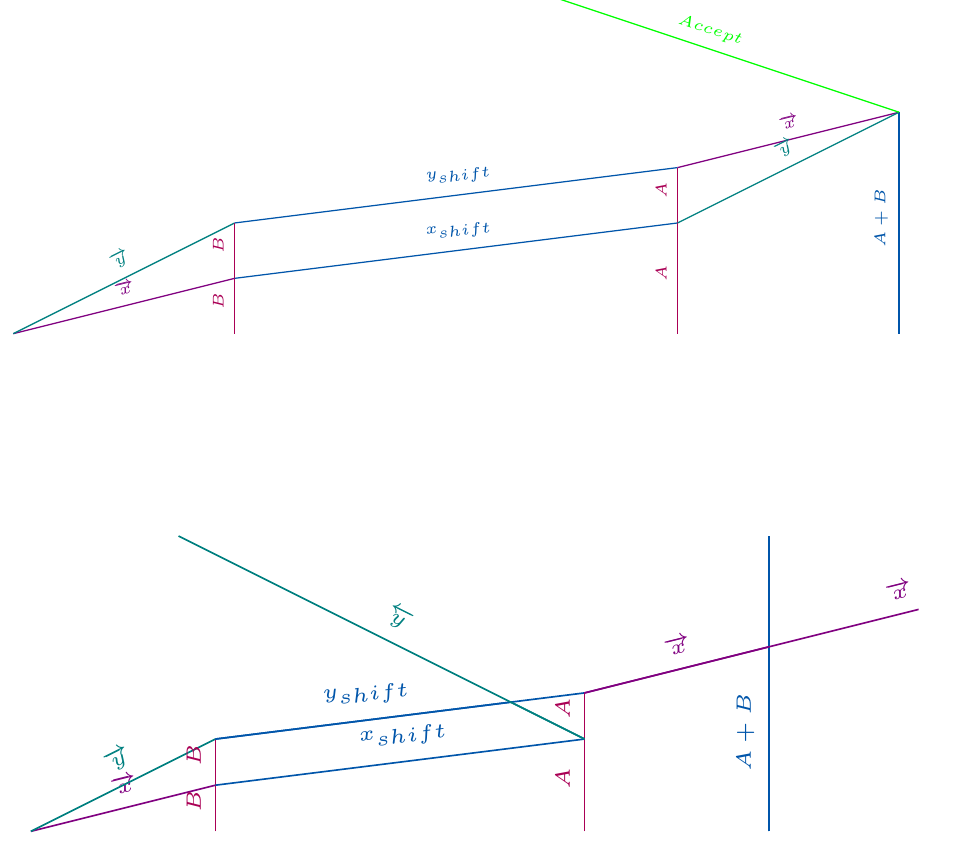}}
\caption{A nondeterminestic signal machine with four possible diagram.}
\label{fig:NSM}
\end{center}
\end{figure}

 \subsection{Simulation of NSM by DSM}
 
 According to definition of NSM, every deterministic signal machine is a special
case of NSM. Now, we investigate if nondeterminism brings more
computational
power to signal machine or not.

We define \textit{k}-Restricted Nondeterministic Signal Machine (\textit{k}-RNSM) as a NSM
having following conditions:

\begin{itemize}
\item No two collisions occur in exactly same time.
\item At most two rules are defined for each collision.
\item An input is accepted by the machine if and only if a special signal accept is
produced during computation.
\item If an input is accepted by the machine, at most k collision occurs before
creation of the accept signal.
\end{itemize}
\begin{theorem} 
Let NN be an k-RNSM, then, there exists a deterministic signal
machine $D_{\textit{NN}}$ that accepts each input configuration if and only if NN accepts
the input.
\end{theorem}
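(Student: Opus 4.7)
The plan is to simulate, in parallel and deterministically, all $\le 2^{k}$ computation branches that NN could possibly take before producing an accept signal. Since each nondeterministic collision admits at most two rules and an accepting run has at most $k$ collisions, the branches are indexed by binary strings of length at most $k$. I will design $D_{NN}$ so that it explores the entire $2^{k}$-leaf binary tree of branches inside one deterministic signal machine and outputs accept exactly when at least one leaf does.

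First I would use the fractal cloud construction of Proposition~4, together with the Middle, Freezing/Unfreezing, and Scaling propositions, to produce $2^{k}$ scaled-down, mutually disjoint copies of the initial configuration $c_{0}$. At each of the $k$ subdivision levels of the fractal tree, the current slab is frozen (Proposition~2), its midpoint is located (Proposition~1), the frozen configuration is scaled and duplicated into the two halves (Proposition~3), and a single \emph{bit} meta-signal of value $0$ is deposited on the left copy and of value $1$ on the right copy to record the branch choice made at that level. After $k$ levels, each leaf region contains a scaled copy of $c_{0}$ alongside a length-$k$ word of bit-signals encoding that leaf's index in $\{0,1\}^{k}$.

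Next I would unfreeze each leaf and replace NN's nondeterministic rules by deterministic ones driven by these bit-signals. For every nondeterministic rule $R(\{\sigma_{1},\dots,\sigma_{n}\})$ with two alternatives, I introduce an auxiliary control meta-signal that travels just ahead of the computation inside each leaf; I then put into $D_{NN}$ two deterministic rules, one that applies alternative~1 when the leading bit-signal has value $0$, and one that applies alternative~2 when it has value $1$, together with a rule that discards the consumed bit and shifts the remaining bits forward. Because the k-RNSM condition forbids two collisions occurring at exactly the same time, the collisions within a single leaf are linearly ordered, so the $i$-th nondeterministic collision encountered in that leaf deterministically consumes the $i$-th bit of its label. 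After all leaves finish, a collector signal sweeps across the $2^{k}$ regions and emits a global accept iff it meets an accept signal from any leaf. This yields both implications: an accepting run of NN has some choice-sequence $b\in\{0,1\}^{\le k}$ (pad arbitrarily to length $k$), and the leaf labelled by $b$ simulates precisely that run, producing accept; conversely, any leaf that produces accept witnesses an actual accepting branch of NN.

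The main obstacle I expect is precisely this bit-consumption mechanism: ensuring that each nondeterministic collision in a given leaf picks up the correct bit of that leaf's label, and that deterministic collisions never spuriously consume a bit. This requires laying out the geometry so that the bit-signals and the computation signals only interact at the intended nondeterministic collisions; I would control this through the placement and speeds of the auxiliary control signals, and, if necessary, by invoking Proposition~3 a second time to enforce enough horizontal separation between successive bits. The remaining ingredients (fractal cloud, freezing, scaling, middle) are reusable essentially as stated in the cited propositions.
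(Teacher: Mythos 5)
Your proposal follows essentially the same route as the paper: a depth-$k$ fractal cloud producing $2^{k}$ scaled, frozen copies of $c_{0}$, each tagged with a $k$-bit binary beam laid down one bit per subdivision level, and a per-leaf deterministic simulation in which the $i$-th nondeterministic collision is resolved by the $i$-th bit (the paper implements this last step by freezing the leaf at each collision and sending a message signal from the beam, whereas you consume and shift the bits in place, but this is an implementation detail within the same strategy). The correctness argument — at least one leaf realizes any accepting choice sequence, and no leaf can accept unless NN has an accepting branch — is likewise the one the paper gives.
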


\begin{proof}
The idea of the proof is to produce all possible paths of nondeterministic
computations of \textit{NN} in a fractal cloud, and test if one of them leads to
an accept signal.
\end{proof}

Since for each input, there are at most k collisions before production of
an accept signal in \textit{NN} , thus, we only have to check at most $2^k$ different
possible space-time diagrams which may be produced by \textit{NN} . We produce $2^k$
different paths by utilizing $2^k$ binary unique numbers and assign them to different
computation paths. Thus, computation in $D_\textit{NN}$ consists of two stages:
\begin{itemize}
  \item A fractal cloud with $2^k$ leaves (depth of \textit{k}) is produced, each leaf is used for
simulation of one possible space-time diagram of \textit{NN} . Thus, at the last level
of the fractal cloud, in each branch or leaf we have a unique binary number
and also an initial configuration of \textit{NN} . To represent $2^k$ binary numbers we
use \textit{k} stationary signals. We call a set of k signals, which represent a binary
number, a binary beam.
  
  \item At each leaf of fractal cloud, a possible space-time diagram of \textit{NN} for the
given input is simulated. In each possbile space-time diagram at most k
collisions are made. According to the binary beam of a branch, for each
collision, we choose between two possible collision rules and apply that.

\end{itemize} 
Note that if \textit{NN} accepts the input, the input signal will be produced in at least one leaf of fractal cloud in $D_\textit{NN}$.

\subsection{constructing a fractal cloud of $2^k$ binary numbers and initial configuration}
In order to construct $2^k$ binary beams, we combine the idea of decision tree and
architecture of fractal cloud with \textit{k} division levels \cite{duchier2010fractal} (see Figure \ref{fig:K2:A}). Also, in
each dividing level of constructing fractal cloud, we use an automatically scaling
procedure which is called \textit{lens device} \cite{duchier2012computing}. This is done in order to shrinking the
data and beam according to the structure in each division.

In order to construct the cloud, we start with a initial configuration of \textit{NN} and
initial configuration of a fractal cloud. First, we freeze and scale the whole initial
configuration of \textit{NN} coupled with a beam of $\{b_1, b_2, ... , b_k\}$ which is consists
of \textit{k} raw signals for constructing a binary number. 
Each signal $b_i$ is changed to either $0_i$ or $1_i$ at the $i$-th level. After freezing and scaling, we distribute frozen
data through the fractal cloud and make a decision for signal of $b_i$ at $i$-th level (see Figure \ref{fig:K2:B}).

\begin{figure}[!htbp]
\centering
\subfigure[]{\includegraphics[scale=0.5]{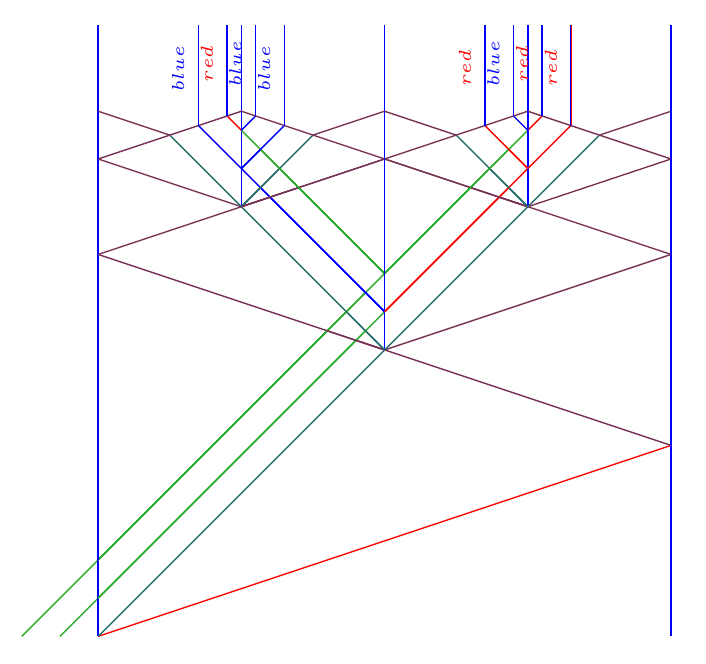}\label{fig:K2:A}}
\subfigure[]{\includegraphics[scale=0.75]{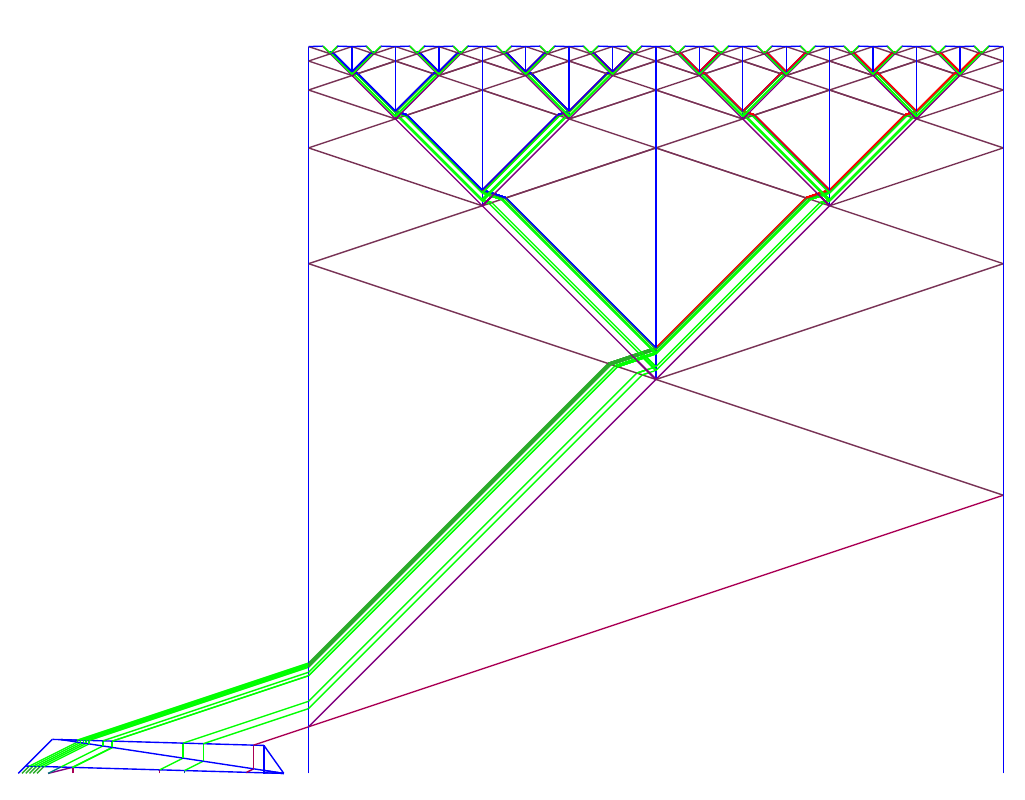}\label{fig:K2:B}}
\caption{binary decision tree and applying it on our complete fractal cloud.} 
\label{fig:K2}
\end{figure}

\subsection{ Executing one of NSM's paths deterministically according to the binary beam} 

Now suppose that we are at one leaf of our constructed fractal cloud and we
want to construct the corresponding space-time diagram. The idea is to simulate
computations of \textit{NN} collision by collision. For the $i$-th collision, we make a
decision on choosing one of the at most two possible collision rules according to
the value of ($b_i$), and apply it.

Thus, we start from the input configuration (a copy of input signals are
presented in each leaf of the tree), and at the $i$-the collision, we freeze the
computation and choose the collision. We suppose that the speed of freezing
signal is high enough that no other collisions occurs before freezing whole the
computations. A \textit{freezing} signal is send toward the binary beam, and then a
$message$ signal is send in opposite direction, which encodes the value of $b_i$. When
the $message$ signal meets the encoded collision signal, according to value of
message signal (\textit{True} or \textit{False}), one of the defined rules of \textit{NN} for the current
collision is applied. Also in this point, two unfreezing signals are send in both
direction with same speed as the freezing signal to unfreeze the computation.
Figure \ref{fig:collision point} represent how we approach in each collision point of \textit{NN}.

\begin{figure}[!htbp]
\begin{center}
\includegraphics[scale=1]{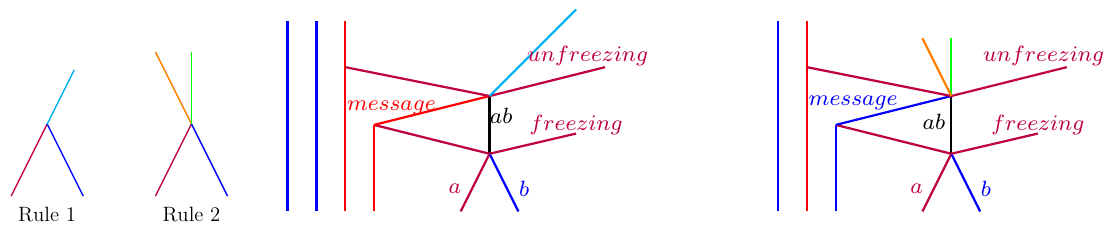}
\caption{Execution
at a collision point of a nondeterministic signal machine.}
 \label{fig:collision point}

\end{center}
\end{figure}
Figure \ref{fig:NSM_To_DSM} shows a deterministic possible path obtained by a binary beam for
the NSM which is introduced in the Figure \ref{fig:NSM}.

Since there are at most $k$ collisions in \textit{NN} before producing the accept signal,
thus, if \textit{NN} accepts an input configuration, $D_\textit{NN}$ will produce an accept signal
in at least one of its leaves of the fractal cloud. Also, if \textit{NN} does not accept an
input, the accept signal will not be produced in any leaves of the combinatorial
comb in  $D_\textit{NN}$. Thus, $D_\textit{NN}$ produces the same output as \textit{NN} for any given input.

Note that the space complexity of $D_\textit{NN}$ for an input configuration is $O(k2^ks)$,
where $s$ is the space complexity of \textit{NN} over the input configuration.

\begin{figure}[!htbp]
\begin{center}
\label{fig:Example} 
\includegraphics[scale=1.0]{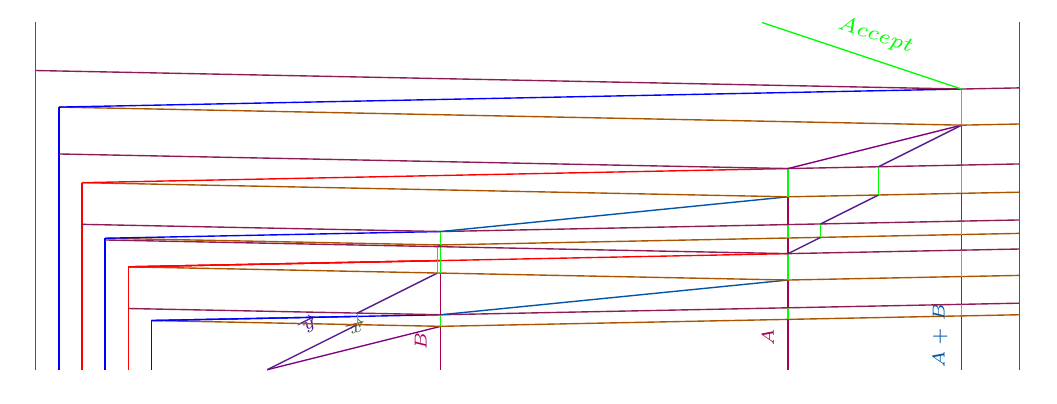}
\caption{Executing a NSM's path according to a binary number.}
\label{fig:NSM_To_DSM}
\end{center}
\end{figure}

\section{Conclusion and Future Works}
\label{sec:conclusion}

Signal machine is a model of computation inspired by information propagation
in cellular automata, where computation rules are defined according to signals
and their collisions. Although many computational aspects of signal machine is
previously investigated, but, the nondeterministic model of signal machines is still
remained unexplored. In this paper, we introduced the concept of nondeterministic
signal machine which is allowed to have more than one applicable rule for each
collision. We showed that for each nondeterministic signal machine under some
assumptions, there is a corresponding deterministic signal machine producing
same output on any given input configuration.

As future works, we will focus on more general classes of nondeterministic
signal machines. For example, we may consider the case that value $k$, maximum
number of collisions before acceptance, is unknown, and check that is there a
deterministic simulator for each (possibly unbounded) nondeterministic signal
machine. The idea we have is that we may guess the value $k$ and check simulate
a $\textit{k-RNSM}$. If it accepts after k collisions or has at most $k-1$ collisions, we
stop the simulation. Otherwise we may increase the guessed value k and repeat
the simulation. Also, we will try to check that whether the restriction of having
at most two nondeterministic rule for each collision reduces the computational
power of NSM or not.

\bibliographystyle{splncs03}
\bibliography{FinalReferences}


\end{document}